\begin{document}
\title{Effective alpha theory certification using interval arithmetic: alpha theory over regions}
\titlerunning{Alpha theory over regions}
%
\author{Kisun Lee\inst{1}\orcidID{0000-0003-1191-1400}}
%
%
\institute{Clemson University, Clemson SC 29634, USA 
\email{kisunl@clemson.edu}\\
\url{https://klee669.github.io}}
\maketitle              
\begin{abstract}
We reexamine Smale's alpha theory as a way to certify a numerical solution to an analytic system. For a given point and a system, Smale's alpha theory determines whether Newton's method applied to this point shows the quadratic convergence to an exact solution. We introduce the alpha theory computation using interval arithmetic to avoid costly exact arithmetic. As a straightforward variation of the alpha theory, our work improves computational efficiency compared to software employing the traditional alpha theory.

\keywords{Newton's method \and numerical certification \and analytic system \and polynomial equations \and alpha theory \and interval arithmetic}
\end{abstract}
\section{Introduction}

The primary focus of this paper is to \emph{certify} a numerical solution to an analytic system $F:U\rightarrow\mathbb{C}^n$ defined in an open set $U\subset \mathbb{C}^n$. Certifying a solution is determining if a given point can be refined to an exact solution up to arbitrary precision. 
\emph{Newton's method} is a widely recognized method for this task. For an analytic system $F$, we define the \emph{Newton operator} $N_F(x)$ for $F$ by
\[N_F(x)=\left\{\begin{array}{ll}
    x-JF(x)^{-1}F(x) & \text{if }JF(x)\text{ is invertible,}\\
    x & \text{otherwise.}
\end{array}\right.\]
This operator is applied iteratively at a point $x\in \mathbb{C}^n$ to approximate the exact solution $x^\star$ of $F$. Especially, when the point $x$ is ``close'' to $x^\star$,  it is known that $x$ shows the quadratic convergence to $x^\star$. Extensive research on the convergence of Newton's method has led to the development of \emph{Smale's alpha theory} \cite[Chapter 8]{blum2012complexity}, which establishes criteria for a point and a system to show the quadratic convergence of Newton's method.

A drawback of employing the alpha theory for certification lies in the necessity for computationally expensive exact arithmetic to ensure its rigor. Although the Krawczyk method \cite{breiding2023certifying,burr2019effective,krawczyk1969newton,lee2019certifying,moore2009introduction} utilizes faster arithmetic, it does not always guarantee the quadratic convergence.

This paper introduces the \emph{alpha theory over regions}. It executes the alpha theory computation with an interval (vector) input containing a candidate solution to enable efficient interval arithmetic. By developing the alpha theory over regions, our work introduces novel aspects; the enhancement of certification speed for numerical solutions, surpassing the method outlined in \cite{hauenstein2012algorithm}.

The structure of the paper is as follows. In Section \ref{sec:prelim}, we review the concepts of Smale's alpha theory and interval arithmetic. In Section \ref{sec:alpha-regions}, the alpha theory over regions is introduced, which is the main result of the paper. 
Some remarks for implementing the alpha theory over regions are discussed in Section \ref{sec:implementation-details}. 
The experimental results are provided in Section \ref{sec:experiments}.
 
\section{Preliminaries}\label{sec:prelim}

We introduce the concepts needed for the alpha theory over regions. Initially, we discuss the alpha theory, followed by a review of interval arithmetic.

\subsection{Smale's alpha theory}\label{sec:alpha_theory}

Let $F:U\rightarrow \mathbb{C}^n$ be an analytic system defined in an open set $U\subset\mathbb{C}^n$.
Then, for a point $x\in\mathbb{C}^n$, the \emph{$k$-th Newton iteration} $N_F^k(x)$ is defined by applying the Newton operator $k$ times at $x$. For an exact solution $x^\star$ to the system $F$, suppose that we have
$\left\|N_F^k(x)-x^\star\right\|\leq \left(\frac{1}{2}\right)^{2^k-1}\|x-x^\star\|$
for every $k\in\mathbb{N}$. Then, we say that $x$ \emph{converges quadratically} to $x^\star$, and $x$ is called an \emph{approximate solution} to $F$ with the \emph{associated solution} $x^\star$. In other words, certifying $x$ means proving $x$ is an approximate solution to $F$ associated with $x^\star$. 

The alpha theory exploits three values obtained from $x$ and $F$. If the Jacobian $JF(x)$ is invertible, we define  
\begin{equation*}
	\begin{array}{ll}
	\alpha(F,x) :=  \beta(F,x)\gamma(F,x)&\gamma(F,x)  :=  \sup\limits_{k\geq 2}\left\|\frac{JF(x)^{-1}J^kF(x)}{k!}\right\|^{\frac{1}{k-1}}\\
	\beta(F,x)  :=  \|x-N_F(x)\|=\|JF(x)^{-1}F(x)\|&
	\end{array}
\end{equation*}
where $J^kF(x)$ is a symmetric tensor whose components are $k$-th order partial derivative of $F$. The value $\beta(F,x)$ is the Euclidean norm of Newton step for $F$ at $x$. The norm used in $\gamma(F,x)$ is the operator norm for $JF(x)^{-1}J^kF(x)$ which is induced from the norm on the $k$-fold symmetric power $S^k\mathbb{C}^n$ of $\mathbb{C}^n$. When $JF(x)^{-1}$ is not invertible, we define $\alpha(F,x)=\beta(F,x)=\gamma(F,x)=\infty$. Results of the alpha theory are like the following:

\begin{theorem}\cite[Section 8.2, Theorem 2, Theorem 4 and Remark 6]{blum2012complexity}\label{thm:alpha-1} Let $F:U\rightarrow \mathbb{C}^n$ be an analytic system, and $x$ be a given point in $U$. Then,
\begin{enumerate}
    \item if 
$\alpha(F,x)<\frac{13-3\sqrt{17}}{4}$,
then $x$ is an approximate solution to $F$. Moreover, $\|x-x^\star\|\leq 2\beta(F,x)$ where $x^\star$ is the associated solution to $x$, and 
\item \label{thm:distinct} if $\alpha(F,x)<0.03$ and $\|x-y\|<\frac{1}{20\gamma(F,x)}$ for some $y\in U$, then $x$ and $y$ are approximate solutions to the same associated solution to $F$. In addition, there is a unique solution $x^\star$ to $F$ in the ball $B(x,\frac{1}{20\gamma(F,x)})$ centered at $x$ with the radius $\frac{1}{20\gamma(F,x)}$. 
\end{enumerate}
\end{theorem}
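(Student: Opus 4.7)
The plan is to follow Smale's classical argument centered on the auxiliary function $\psi(u) = 1 - 4u + 2u^2$, whose smallest positive root $u_0 = 1 - \sqrt{2}/2$ governs the radius in which the Taylor expansion of $F$ around $x$ is well-controlled by $\gamma(F,x)$. The core idea is that an $\alpha$-bound simultaneously controls how far $N_F(x)$ moves and how well-conditioned the Jacobian remains at the new point, so Newton iteration strictly contracts $\alpha$.

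First, from the Taylor expansion $F(y) = \sum_{k \geq 0} \frac{J^k F(x)}{k!}(y-x)^k$ together with the definition of $\gamma(F,x)$, I would derive a pair of $\gamma$-estimates: whenever $u := \gamma(F,x)\|y - x\| < u_0$, the Jacobian $JF(y)$ is invertible, and both $\|JF(y)^{-1} JF(x)\|$ and $\gamma(F,y)/\gamma(F,x)$ are bounded by explicit rational functions of $u$. These follow by writing $JF(y) = JF(x)(\mathrm{Id} - T)$ for an operator $T$ whose norm is controlled by a geometric series in $u$, then inverting $\mathrm{Id} - T$ via a Neumann series.

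Applying these estimates at $y = N_F(x)$, where $\|y - x\| = \beta(F,x)$ and $u = \alpha(F,x)$, produces explicit rational bounds
\[ \beta(F, N_F(x)) \leq h_1(\alpha(F,x))\, \beta(F,x), \qquad \gamma(F, N_F(x)) \leq h_2(\alpha(F,x))\, \gamma(F,x) \]
for specific rational functions $h_1, h_2$. The threshold $\frac{13 - 3\sqrt{17}}{4}$ is precisely the largest value of $\alpha$ for which the composite $h_1 h_2$ stays strictly below $\tfrac{1}{2}$; this forces $\alpha(F, N_F^k(x))$ to decay at a double-exponential rate, yielding the approximate-solution bound $\|N_F^k(x) - x^\star\| \leq 2^{-(2^k-1)}\|x - x^\star\|$, while summing the geometric-like series $\sum_k \beta(F, N_F^k(x))$ identifies $x^\star$ as the limit and gives $\|x - x^\star\| \leq 2\beta(F,x)$.

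For part (2), the stronger hypothesis $\alpha(F,x) < 0.03$ permits a sharpened post-iteration estimate that places $x^\star$ strictly inside $B(x, \tfrac{1}{20\gamma(F,x)})$. For any $y$ in this ball, the $\gamma$-estimates give $\alpha(F,y) < \frac{13-3\sqrt{17}}{4}$, so $y$ is an approximate solution by part (1); the invertibility of $JF$ throughout the ball precludes a second zero, and a closeness argument on the Newton iterates of $x$ and $y$ (both trapped in shrinking balls around the common associated solution) forces them to share the same limit. The main obstacle is the delicate tuning of the numerical constants $\frac{13-3\sqrt{17}}{4}$, $0.03$, and $\tfrac{1}{20}$, which arises from solving a coupled implicit inequality in $\alpha$ that must simultaneously guarantee strict contraction of $h_1 h_2$ and keep every iterate within the domain of $\psi$.
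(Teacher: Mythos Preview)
The paper does not prove this theorem. It is stated as a known result, with an explicit citation to \cite[Section 8.2, Theorem 2, Theorem 4 and Remark 6]{blum2012complexity}, and is used as a black box in the subsequent development of the alpha theory over regions (Theorem~\ref{thm:alpha-regions} and Corollary~\ref{cor:alpha-distinct}). There is therefore no ``paper's own proof'' to compare your proposal against.

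That said, your outline is a faithful sketch of the classical Smale argument as presented in the cited reference: the auxiliary function $\psi(u)=1-4u+2u^2$, the $\gamma$-estimates obtained from the Neumann series for $JF(y)^{-1}JF(x)$, the recursion on $\alpha,\beta,\gamma$ along the Newton sequence, and the identification of $\tfrac{13-3\sqrt{17}}{4}$ as the threshold for double-exponential decay are all standard ingredients of that proof. If you intend to supply a self-contained proof where the paper offers none, this is the right skeleton; just be aware that the precise rational functions $h_1,h_2$ and the verification that the constants $0.03$ and $\tfrac{1}{20}$ close the argument in part~(2) require careful bookkeeping that your sketch leaves implicit.
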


The first part of Smale's alpha theory checks if a point $x$ is an approximate solution to a system $F$.
The second part of the alpha theory identifies when two different points have the same associated solution to $F$, that is, it certifies distinct numerical solutions.

The most challenging part of implementing the alpha theory is the computation of the gamma value. To resolve this issue, a known approach is to find an upper limit for the gamma value. For a polynomial $f=\sum\limits_{|\nu|\leq d}a_\nu x^\nu$ of degree $d$, we define the \emph{Bombieri-Weyl norm} $\|f\|^2=\frac{1}{d!}\sum\limits_{|\nu|\leq d}\nu!(d-|\nu|)!|a_\nu|^2$. This norm extends to a polynomial system $F=\{f_1,\dots, f_n\}$ with $\|F\|^2=\sum\limits_{i=1}^n\|f_i\|^2$. For a point $x\in \mathbb{C}^n$, we define $\|(1,x)\|^2=1+\sum\limits_{i=1}^n |x_i|^2$. Let $d_i$ be the degree of each polynomial $f_i$, and set $D=\max\limits_i\{d_i\}$. Finally, define the diagonal matrix $\Delta_F(x)$ whose diagonal entry is given by $\Delta_F(x)_{ii}:=\sqrt{d_i} \|(1,x)\|^{d_i-1}$. Then, the following result provides an upper bound for $\gamma(F,x)$ for a polynomial system $F$:
\begin{proposition}\cite[Section I-3, Proposition 3]{shub1993complexity}\label{prop:higher-bound}
    Consider a polynomial system $F:\mathbb{C}^n\rightarrow\mathbb{C}^n$ and a point $x\in\mathbb{C}^n$ such that $JF(x)$ is invertible. We define 
    $\mu(F,x):=\max\{1,\|F\|\|JF(x)^{-1}\Delta_F(x)\|\}$
    with the operator norm for $JF(x)^{-1}\Delta_F(x)$. Then, 
    $\gamma(F,x)\leq \frac{\mu(F,x)D^{\frac{3}{2}}}{2\|(1,x)\|}$.
\end{proposition}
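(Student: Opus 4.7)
The plan is to split $JF(x)^{-1}J^k F(x)/k!$ through the diagonal matrix $\Delta_F(x)$ by writing
\[
\frac{JF(x)^{-1}J^k F(x)}{k!} = \bigl(JF(x)^{-1}\Delta_F(x)\bigr)\cdot\frac{\Delta_F(x)^{-1}J^k F(x)}{k!}
\]
and bounding the two factors separately. Submultiplicativity of operator norms isolates the conditioning-like term $\|JF(x)^{-1}\Delta_F(x)\|$, which by definition of $\mu(F,x)$ is bounded by $\mu(F,x)/\|F\|$, from a factor that depends only on the higher derivatives and the Bombieri--Weyl norms of the component polynomials.

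The technical core is a rowwise application of the classical Bombieri--Weyl derivative inequality: for every component $f_i$ of degree $d_i$,
\[
\left\|\frac{J^k f_i(x)}{k!}\right\| \leq \binom{d_i}{k}\,\|f_i\|\,\|(1,x)\|^{d_i-k}.
\]
I would establish this as a lemma by homogenizing $f_i$ to a degree-$d_i$ form $\tilde f_i$ on $\mathbb{C}^{n+1}$, rewriting $J^k f_i(x)/k!$ as the polarization of $\tilde f_i$ with $d_i-k$ slots filled by $(1,x)$, and applying Cauchy--Schwarz in the Bombieri--Weyl inner product. The unitary invariance of the Bombieri--Weyl norm is what makes the factor $\|(1,x)\|^{d_i-k}$ sharp, while the binomial arises from polarization bookkeeping.

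Combining the rowwise bound with the scaling $\Delta_F(x)_{ii}=\sqrt{d_i}\|(1,x)\|^{d_i-1}$ and summing the squared row-bounds using $\|F\|^2=\sum_i\|f_i\|^2$ yields
\[
\left\|\frac{\Delta_F(x)^{-1}J^k F(x)}{k!}\right\| \leq \frac{\|F\|}{\|(1,x)\|^{k-1}}\,\max_i\frac{\binom{d_i}{k}}{\sqrt{d_i}}.
\]
Multiplying by the bound on $\|JF(x)^{-1}\Delta_F(x)\|$ cancels $\|F\|$ and gives
\[
\left\|\frac{JF(x)^{-1}J^k F(x)}{k!}\right\| \leq \frac{\mu(F,x)}{\|(1,x)\|^{k-1}}\,\max_i\frac{\binom{d_i}{k}}{\sqrt{d_i}}.
\]
Taking $(k-1)$-th roots then reduces the $\sup_{k\geq 2}$ to the purely combinatorial inequality
\[
\sup_{k\geq 2}\left(\frac{\binom{d}{k}}{\sqrt{d}}\right)^{1/(k-1)} \leq \frac{d^{3/2}}{2},
\]
whose maximum is attained at $k=2$, yielding $\sqrt{d}(d-1)/2\leq D^{3/2}/2$ and matching the stated constant.

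The step I expect to be the main obstacle is the operator-norm bookkeeping for the symmetric multilinear forms $J^k f_i(x)/k!$: the $\ell^2$-assembly of rowwise bounds into a bound on the tensor $\Delta_F(x)^{-1}J^k F(x)/k!$ is not a direct consequence of submultiplicativity on symmetric tensor spaces, so one must test against a unit vector $v\in S^k\mathbb{C}^n$ and apply Cauchy--Schwarz over the index $i$. A secondary subtlety is verifying that the combinatorial supremum is genuinely attained at $k=2$, which requires a careful comparison $\binom{d}{k}^{1/(k-1)}\leq d/2\cdot O(1)$ for $k\geq 3$ against the explicit $k=2$ value.
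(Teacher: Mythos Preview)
The paper does not supply its own proof of this proposition; it is quoted verbatim from Shub--Smale and cited as \cite[Section I-3, Proposition 3]{shub1993complexity}. Your proposal is a faithful reconstruction of that classical argument: the factorization through $\Delta_F(x)$, the rowwise Bombieri--Weyl higher-derivative estimate, the $\ell^2$-assembly via $\|F\|^2=\sum_i\|f_i\|^2$, and the reduction to the combinatorial supremum $\bigl(\binom{d}{k}/\sqrt{d}\bigr)^{1/(k-1)}$ are exactly the steps in Shub--Smale, and they are also the steps the present paper invokes (by citation to \cite[Section III-1, Theorem 1]{shub1993complexity}) when it later adapts this bound to the interval and simple-double-zero settings. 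One small point you left implicit: after taking $(k-1)$-th roots you need $\mu(F,x)^{1/(k-1)}\leq\mu(F,x)$, which is precisely where the $\max\{1,\cdot\}$ in the definition of $\mu$ is used; otherwise your outline is complete and matches the source.
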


Upper bounds for the gamma value have been studied in various instances of systems of equations. In the case of systems involving polynomial-exponential equations, \cite{hauenstein2017certifying} accomplished this. Additionally, for a broader range of systems, \cite{burr2019effective} introduced an upper bound for the gamma value when dealing with systems with univariate $D$-finite functions.

There are known implementations of the alpha theory. For standalone software, \texttt{alphaCertified} \cite{hauenstein2011alphacertified} is used. On the other hand, the \texttt{Macaulay2} package \texttt{NumericalCertification} \cite{lee2019certifying} provides a specialized implementation for computation in \texttt{Macaulay2} \cite{M2}.

\subsection{Interval arithmetic}


Interval arithmetic introduces operations between intervals to perform conservative computations to produce a certified result. For example, for two intervals $[a,b]$ and $[c,d]$ over $\mathbb{R}$, and an arithmetic operation $\odot$, we define 
$[a,b]\odot[c,d]:=\{x\odot y\mid x\in [a,b], y\in [c,d]\}$.
For explicit formulas for the standard arithmetic operations (e.g.\ $+,-,\cdot,/$), see \cite{moore2009introduction}.
Interval arithmetic can be extended to $\mathbb{C}$ by introducing an interval with real and imaginary parts, that is, $I=\Re(I)+i\Im(I)$.

For an interval $I$ in $\mathbb{R}$, 
we define the minimum absolute value over the points in $I$ by $\lfloor I\rfloor=\min\limits_{x\in I}|x|$. 
Consider an interval vector $I=(I_1,\dots, I_n)$ in $\mathbb{R}^n$. We define 
$\llfloor I\rrfloor^2=\sum\limits_i \lfloor I_i\rfloor^2$. Note that $\llfloor I\rrfloor$ is the minimum of $2$-norms over all points in $I$, and it can be extended to intervals in $\mathbb{C}$ naturally. 
For a function $F:\mathbb{C}^n\rightarrow\mathbb{C}^m$ and an interval vector $I$ in $\mathbb{C}^n$, we define an \emph{interval closure} $\square F(I)$ of $F$ over $I$ which is a set containing $\{F(x)\mid x\in I\}$. Usually, the smallest interval in $\mathbb{C}^m$ that contains $\{F(x)\mid x\in I\}$ is used for $\square F(I)$. For a function $F$ that consists of elementary functions (e.g.\ polynomials), the interval closure is obtained by interval arithmetic. Note that the interval closure of a function is not unique in general.

Lastly, we consider an \emph{interval matrix}, a matrix with interval entries. For an $n\times n$-interval matrix $M$, we say that an $n\times n$-interval matrix is an \emph{inverse interval matrix} of $M$ if it contains the set $\{N^{-1}\mid N\in M\}$, which is denoted by $M^{-1}$. Note that $M^{-1}$ may not be unique. We discuss how to compute the inverse interval matrix in Section \ref{sec:inverse-interval-matrix}.


\section{The alpha theory over regions}\label{sec:alpha-regions}
The goal of this section is to extend the results from Section \ref{sec:alpha_theory} to the case when the input is given by an interval vector rather than a point. Let $F:U\rightarrow \mathbb{C}^n$ be an analytic system defined in an open set $U\subset \mathbb{C}^n$, and $I$ be an interval vector that is contained in $U$. Then, we define the three values given by $F$ and $I$ like the following:
\begin{equation*}
	\begin{array}{l}
	\alpha(F,I) :=  \beta(F,I)\gamma(F,I)\\
	\beta(F,I)  :=  \max\limits_{x\in I}\beta(F,x)\\
	\gamma(F,I)  :=  \max\limits_{x\in I}\gamma(F,x)
	\end{array}.
\end{equation*}
If $JF(x)$ is not invertible at some point $x\in I$, we define $\alpha(F,I)=\beta(F,I)=\gamma(F,I)=\infty$. 


We state the results of the alpha theory over regions. 
\begin{theorem}\label{thm:alpha-regions}
    Let $F:U\rightarrow \mathbb{C}^n$ be an analytic system, and $I$ be an interval vector in $U$. Then,
    \begin{enumerate}
        \item \label{thm:alpha-regions1} if $\alpha(F,I)<\frac{13-3\sqrt{17}}{4}$,
    then all points in $I$ are approximate solutions to $F$ with the associated solution $x^\star$ that is contained in $\bigcap\limits_{x\in I}B(x,2\beta(F,I))$, and 
    \item \label{thm:distinct-regions} if $\alpha(F,I)<0.03$,
    then all points in  
    $\bigcup\limits_{x\in I}B(x,\frac{1}{20\gamma(F,I)})$ are approximate solutions to the same associated solution to $F$.    
    \end{enumerate}
\end{theorem}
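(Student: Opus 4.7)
The plan is to lift Theorem~\ref{thm:alpha-1} from a single point to the interval vector $I$. By the very definitions, $\alpha(F,x) \le \alpha(F,I)$, $\beta(F,x) \le \beta(F,I)$, and $\gamma(F,x) \le \gamma(F,I)$ hold for every $x \in I$, so the pointwise hypothesis of whichever case of Theorem~\ref{thm:alpha-1} is in play is automatically satisfied at every $x \in I$.

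For Part~(\ref{thm:alpha-regions1}), Theorem~\ref{thm:alpha-1}(1) applied at each $x \in I$ provides an associated solution $\phi(x) := x^\star(x)$ with $\|x - \phi(x)\| \le 2\beta(F,x) \le 2\beta(F,I)$. The task then reduces to proving that the map $\phi : I \to \mathbb{C}^n$ is constant on $I$. I would establish this by showing $\phi$ is locally constant. Fix $x_0 \in I$. By a standard fact of Smale's theory, $\alpha(F, N_F^k(x_0))$ decreases (at least) quadratically in $k$, so there exists some $k$ with $\alpha(F, N_F^k(x_0)) < 0.03$. Theorem~\ref{thm:alpha-1}(\ref{thm:distinct}) applied at the point $N_F^k(x_0)$ then yields an open ball $B(N_F^k(x_0), 1/(20\gamma(F, N_F^k(x_0))))$ in which every point shares the associated solution $\phi(x_0)$. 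By analyticity (hence continuity) of $N_F^k$, the preimage of this ball under $N_F^k|_I$ is open in $I$, contains $x_0$, and consists of points $y$ with $\phi(y) = \phi(N_F^k(y)) = \phi(x_0)$, using that Newton iteration preserves the associated solution. Hence $\phi$ is locally constant; since $I$ is convex and thus connected, $\phi \equiv x^\star$ for a common value $x^\star$, and the pointwise bound places $x^\star \in \bigcap_{x \in I} B(x, 2\beta(F,I))$.

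For Part~(\ref{thm:distinct-regions}), the stronger hypothesis $\alpha(F,I) < 0.03$ implies that of Part~(\ref{thm:alpha-regions1}), so all points of $I$ already share a common associated solution $x^\star$. Given any $y \in \bigcup_{x \in I} B(x, 1/(20\gamma(F,I)))$, pick a witness $x \in I$ with $\|x - y\| < 1/(20\gamma(F,I)) \le 1/(20\gamma(F,x))$; combined with $\alpha(F,x) \le \alpha(F,I) < 0.03$, this matches the hypothesis of Theorem~\ref{thm:alpha-1}(\ref{thm:distinct}) for the pair $(x,y)$, and the conclusion there yields that $y$ is an approximate solution with the same associated solution as $x$, namely $x^\star$.

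The main obstacle is the constancy of $\phi$ in Part~(\ref{thm:alpha-regions1}); everything else is a direct pointwise transfer. The quantitative input I need---that $\alpha$ along the Newton orbit from $x_0$ falls below $0.03$ in finitely many iterations---is classical but must be invoked rather than being produced by Theorem~\ref{thm:alpha-1} alone. Once that is granted, Theorem~\ref{thm:alpha-1}(\ref{thm:distinct}) at the iterate, combined with the continuity of $N_F^k$ and the connectedness of $I$, closes the argument.
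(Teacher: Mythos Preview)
Your proposal is correct, and for Part~(\ref{thm:distinct-regions}) it is essentially identical to the paper's argument. For Part~(\ref{thm:alpha-regions1}), however, you take a genuinely different route to the constancy of the associated-solution map $\phi$.

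The paper's proof handles this step in one line: from $\alpha(F,I)<\infty$ it asserts that each $N_F^k$ is continuous on $I$, and then says ``hence, all points in $I$ must converge to the same associated solution.'' The implicit reasoning is that the uniform rate $\|N_F^k(x)-\phi(x)\|\le (1/2)^{2^k-1}\|x-\phi(x)\|\le (1/2)^{2^k-1}\cdot 2\beta(F,I)$ makes $\phi$ a uniform limit of continuous maps, hence continuous; since the image of $\phi$ lies in the (discrete) set of nonsingular zeros of $F$ and $I$ is connected, $\phi$ is constant. Your argument instead iterates Newton at a chosen $x_0$ until $\alpha$ drops below $0.03$, invokes the uniqueness ball of Theorem~\ref{thm:alpha-1}(\ref{thm:distinct}) at that iterate, and pulls the ball back through $N_F^k$ to exhibit local constancy directly.

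What each approach buys: the paper's version is shorter but leaves the key ``hence'' to the reader and silently uses that nonsingular zeros are isolated. Your version is more explicit and stays entirely within the toolkit of Theorem~\ref{thm:alpha-1}, at the cost of importing one classical fact (that $\alpha$ along the Newton orbit eventually falls below any threshold). Both require checking that $N_F^k$ is well-defined and continuous on $I$, which follows since every $x\in I$ is an approximate solution with invertible Jacobian along its orbit; you and the paper both gloss over this in the same way.
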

\begin{proof}
\begin{enumerate}
    \item We begin by noting $\alpha(F,x)\leq \alpha(F,I)$ for any point $x\in I$. Thus, $\alpha(F,I)<\frac{13-3\sqrt{17}}{4}$ implies that $\alpha(F,x)<\frac{13-3\sqrt{17}}{4}$, and hence, any point $x\in I$ is an approximate solution to $F$ by Theorem \ref{thm:alpha-1}. 
As $\alpha(F,I)<\infty$, we have the continuity of the $k$-th Newton iteration $N^k_F(x)$ over $I$ for all $k$. Hence, all points in $I$ must converge to the same associated solution $x^\star$. Since $x^\star$ is contained in $B(x,2\beta(F,I))$ for each point $x\in I$, we have that $x^\star$ is contained in $\bigcap\limits_{x\in I}B(x,2\beta(F,I))$. 
\item    Since $\alpha(F,I)<0.03$, for any point $x\in I$, all points in $B(x,\frac{1}{20\gamma(F,I)})$ are approximate solutions. By the first part of this theorem, all points in $I$ have the same associated solution to $F$ so that the result  is proved.\hfill\squareforqed
\end{enumerate}
\end{proof}

Since the alpha theory over regions certifies all points in a certain region at once, the process of certifying distinct solutions is more relaxed than the known method in \cite{hauenstein2012algorithm}. We introduce it as the following corollary:

\begin{corollary}\label{cor:alpha-distinct}
    Let $F:U\rightarrow\mathbb{C}^n$ be an analytic system, and $I_1$ and $I_2$ be interval vectors in $U$ with
           $\alpha(F,I_1)<\frac{13-3\sqrt{17}}{4}$ and $\alpha(F,I_2)<\frac{13-3\sqrt{17}}{4}$.
    Then, if $I_1\cap I_2\ne\emptyset$, then $I_1$ and $I_2$ have the same associated solution to $F$. On the other hand, if $\text{dist}(I_1,I_2)>2\beta(F,I_1)+2\beta(F,I_2)$, then $I_1$ and $I_2$ have different associated solutions to $F$. Here, $\text{dist}(I_1,I_2)=\min\{\|x_1-x_2\|\mid x_1\in I_1, x_2\in I_2\}$.
\end{corollary}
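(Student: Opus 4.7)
The plan is to derive both claims as relatively direct consequences of Theorem \ref{thm:alpha-regions}(\ref{thm:alpha-regions1}), which already guarantees that each interval $I_j$ has a single associated solution $x_j^\star$ satisfying $x_j^\star \in B(x,2\beta(F,I_j))$ for every $x\in I_j$.

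For the first claim, I would start by invoking Theorem \ref{thm:alpha-regions}(\ref{thm:alpha-regions1}) on each of $I_1$ and $I_2$ to extract the common associated solutions $x_1^\star$ and $x_2^\star$. Pick any $z\in I_1\cap I_2$; then $z$ is simultaneously an approximate solution with associated solution $x_1^\star$ (as a point of $I_1$) and with associated solution $x_2^\star$ (as a point of $I_2$). Since the Newton iteration $N_F^k(z)$ is a deterministic sequence, its limit is unique, so $x_1^\star=x_2^\star$, giving the conclusion.

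For the second claim, I would argue by contradiction: suppose $I_1$ and $I_2$ share a common associated solution $x^\star$. By Theorem \ref{thm:alpha-regions}(\ref{thm:alpha-regions1}), we have $x^\star\in B(x_1,2\beta(F,I_1))$ for every $x_1\in I_1$ and $x^\star\in B(x_2,2\beta(F,I_2))$ for every $x_2\in I_2$. The triangle inequality then yields
\[
\|x_1-x_2\|\le \|x_1-x^\star\|+\|x^\star-x_2\|\le 2\beta(F,I_1)+2\beta(F,I_2)
\]
for all such pairs, so taking the minimum over $(x_1,x_2)\in I_1\times I_2$ gives $\mathrm{dist}(I_1,I_2)\le 2\beta(F,I_1)+2\beta(F,I_2)$, contradicting the hypothesis.

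Neither direction has a real technical obstacle, since all the heavy lifting is packaged in Theorem \ref{thm:alpha-regions}(\ref{thm:alpha-regions1}). The only subtle point to be careful about is the uniqueness of the associated solution used in the first claim: one must either appeal explicitly to the uniqueness of the Newton-iteration limit, or note that the ball $B(z,2\beta(F,z))$ contains a unique solution (a standard Smale alpha-theory fact implicit in Theorem \ref{thm:alpha-1}). Mentioning this uniqueness explicitly is, I expect, the only place where a careless reader could object.
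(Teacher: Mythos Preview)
Your proposal is correct and follows essentially the same approach as the paper: both parts are deduced directly from Theorem \ref{thm:alpha-regions}(\ref{thm:alpha-regions1}), using a common point of $I_1\cap I_2$ for the first claim and the containment $x^\star\in B(x_i,2\beta(F,I_i))$ together with the triangle inequality for the second. Your write-up is in fact more explicit than the paper's (which is a two-sentence sketch), particularly in flagging the uniqueness of the associated solution in the first part.
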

\begin{proof}
    The first part is clear by applying Theorem 2(\ref{thm:alpha-regions1}) on $I_1$ and $I_2$. The second part follows from the fact that the associated solution of $I_i$ is contained in $B(x_i,2\beta(F,I_i))$ for any point $x_i\in I_i$. \hfill\squareforqed
\end{proof}

Note that the corollary allows a larger alpha value than that of Theorem 1(\ref{thm:distinct}). Hence, it can be used for certifying distinct solutions with a more relaxed condition. 

Finally, for the case with a polynomial system, we state the interval version of Proposition \ref{prop:higher-bound}. For a polynomial system $F:\mathbb{C}^n\rightarrow\mathbb{C}^n$ and an interval vector $I$, we define
    $\mu(F,I):=\max_{x\in I}\mu(F,x)$. 
    Then, the proposition below provides an upper bound for $\gamma(F,I)$.
\begin{proposition}\label{prop:higher-bound-interval}
    Consider a polynomial system $F=\{f_1,\dots, f_n\}$, and an interval vector $I=(I_1,\dots, I_n)$ in $\mathbb{C}^n$. 
    Then,
$\gamma(F,I)\leq\frac{\mu(F,I)D^\frac{3}{2}}{2\llfloor(1,I)\rrfloor}$
    where $(1,I)=([1,1]+i[0,0],I_1,\dots, I_n)$ is the interval vector in $\mathbb{C}^{n+1}$.
\end{proposition}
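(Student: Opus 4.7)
The plan is to derive the bound pointwise from Proposition \ref{prop:higher-bound} and then take the supremum over $x\in I$. Since by definition $\gamma(F,I)=\max_{x\in I}\gamma(F,x)$ and $\mu(F,I)=\max_{x\in I}\mu(F,x)$, the task reduces to controlling the ``denominator'' $\|(1,x)\|$ uniformly for $x\in I$ by an interval-computable quantity, namely $\llfloor(1,I)\rrfloor$.

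First, I would fix an arbitrary $x\in I$. Proposition \ref{prop:higher-bound} gives
\[
\gamma(F,x)\;\leq\;\frac{\mu(F,x)\,D^{\frac{3}{2}}}{2\|(1,x)\|}.
\]
Using monotonicity of the bound in $\mu$, I replace $\mu(F,x)$ by its maximum $\mu(F,I)$ over $I$. Then I need a lower bound on $\|(1,x)\|$ that is independent of the particular $x\in I$. Recall that for an interval vector $J$ in $\mathbb{C}^m$, the quantity $\llfloor J\rrfloor$ was defined precisely as the minimum of $\|\cdot\|_2$ over all points in $J$; applying this to $(1,I)\subset\mathbb{C}^{n+1}$, every $x\in I$ satisfies $\|(1,x)\|\geq \llfloor(1,I)\rrfloor$, and consequently $\frac{1}{\|(1,x)\|}\leq\frac{1}{\llfloor(1,I)\rrfloor}$.

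Combining these two substitutions gives, for every $x\in I$,
\[
\gamma(F,x)\;\leq\;\frac{\mu(F,I)\,D^{\frac{3}{2}}}{2\,\llfloor(1,I)\rrfloor},
\]
and taking the maximum of the left-hand side over $x\in I$ yields the claimed inequality.

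There is no real obstacle here beyond bookkeeping: the statement is a direct pointwise lift of Proposition \ref{prop:higher-bound} combined with the defining property of $\llfloor\cdot\rrfloor$. The only minor care point is to verify that the case $JF(x)$ singular for some $x\in I$ is covered, but in that case $\gamma(F,I)=\mu(F,I)=\infty$ by the conventions fixed at the start of Section \ref{sec:alpha-regions}, and the inequality holds trivially.
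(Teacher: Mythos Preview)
Your argument is correct and follows essentially the same route as the paper: fix an arbitrary $x\in I$, apply Proposition~\ref{prop:higher-bound}, replace $\mu(F,x)$ by $\mu(F,I)$ and $\|(1,x)\|$ by $\llfloor(1,I)\rrfloor$, then take the maximum over $x$. Your added remark on the singular-Jacobian case is a harmless bit of extra bookkeeping the paper leaves implicit.
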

\begin{proof}
Consider any point in $x\in I$. Then,
    \begin{equation*}
        \gamma(F,x)  \leq
        \frac{\mu(F,x)D^\frac{3}{2}}{2\|(1,x)\|}
        \leq
        \frac{\mu(F,I)D^\frac{3}{2}}{2\| (1,x)\|}
        \leq \frac{\mu(F,I)D^\frac{3}{2}}{2\llfloor (1,I)\rrfloor}.
    \end{equation*}
    Since $x$ is an arbitrary point, the result follows.\hfill\squareforqed
\end{proof}

In the actual implementation of alpha theory over regions, we use interval closures $\square\beta(F,I)$ and $\square\mu(F,I)$ instead of $\beta(F,I)$ and $\mu(F,I)$. Computing these interval closures requires the computation of the interval matrix inverse due to $\square JF(I)^{-1}$.


The expected usage of the alpha theory over regions is to replace the usual alpha theory. For a given numerical solution, we apply the alpha theory on an interval vector containing this numerical solution with preferably a small radius. 

\section{Implementation details}\label{sec:implementation-details}

This section points out remarks when implementing the alpha theory over regions. The first part is about the use of the interval arithmetic library MPFI \cite{revol2005motivations} with arbitrary precision. Secondly, we discuss a special type of interval arithmetic for computing a tight inverse interval matrix using LU decomposition.

\subsection{MPFI for arbitrary precision interval arithmetic}

One of the typical ways to get input for the alpha theory over regions is by constructing an interval vector with a certain radius from a given candidate solution of a system. To execute reliable computation, one may require high precision for interval arithmetic. MPFI  is a library written in C for arbitrary precision interval arithmetic using MPFR \cite{10.1145/1236463.1236468}. The purpose of using MPFI is to achieve guaranteed computation results without losing accuracy from the rounding error. The comparison of alpha values according to the change of precision is presented in Section \ref{sec:precision}. One possible drawback of using high precision is, however, that as the precision used in the computation increases, the speed of the calculation may decrease. The comparison of elapsed time between machine precision and MPFI is presented in Section \ref{sec:comparison}.

\subsection{Inverse interval matrix computation via LU decomposition}\label{sec:inverse-interval-matrix}

For computing the inverse interval matrix, LU decomposition may be considered for efficiency compared to other methods (e.g.\ cofactor expansion). We desire a tight inverse interval matrix since an unnecessarily large inverse makes the alpha value greater than what it could be. To achieve this, we introduce interval arithmetic in a special type.

For an interval $I$ in $\mathbb{C}$, we define its \emph{dual interval} that is denoted by $I^\star$. This dual interval $I^\star$ has the same endpoints of $I$ with the following arithmetic:
\begin{align*}
    I+(-I^\star)&=I-I^\star = [0,0],\\
    I\times \frac{1}{I^\star}&= \frac{I}{I^\star}=[1,1] \quad\text{if }0\not\in I.
\end{align*}
Also, we define $(I^\star)^\star=I$ to make the addition and multiplication commutative. Including the dual intervals introduces \emph{Kaucher arithmetic} with a broader collection of intervals than that of usual interval arithmetic. 
It has a more algebraic structure than the usual interval arithmetic while it executes the conservative computation. More specifically, the set of intervals with dual intervals is a group in addition, and the set of intervals not containing zero with dual intervals is a group in multiplication. The more general version of this interval arithmetic is introduced in \cite{goldsztejn2006generalized}.
The interval arithmetic with dual intervals is used for partial pivoting for LU decomposition. For other subtractions and divisions that occur except for pivoting, the usual interval arithmetic is used. 

We briefly elaborate on how to compute the inverse interval matrix. For an $n\times n$ interval matrix $M$ and an $n$-dimensional interval vector $B$, consider an interval linear system $M\mathbf{x}=B$.
Solving this system using the
interval LU decomposition of $M$, we have an interval vector $\mathbf{x}$ satisfying only $B\subseteq M\mathbf{x}$ in general (See \cite[Section 4.5]{neumaier1990interval}, for example).
Using this fact, iterative solving of interval linear systems returns an inverse interval matrix $M^{-1}$ (that is, it returns a set of interval matrices containing $\{N^{-1}\mid N\in M\}$). In particular, using interval arithmetic with dual intervals may return a tighter $M^{-1}$ than the usual interval arithmetic.

\section{Experiments}\label{sec:experiments}

This section provides computational and experimental results as a proof of concept for the alpha theory over regions. The implementation is in \texttt{C++} into two versions, one with double machine precision (in a correct rounding manner for reliable computation), and the other with arbitrary precision using MPFI \cite{revol2005motivations}. It computes alpha, beta, and gamma values from a given square polynomial system and a point. 
All computations in this section are executed in a Macbook M2 pro 3.5 GHz, 16 GB RAM. The code and examples are available at

\begin{center}
\url{https://github.com/klee669/alphaTheoryOverRegions}    
\end{center} 

\subsection{Alpha values according to the radius of the input interval}\label{subsec:experiment-radius}
We analyze the impact of the radius of the interval vector on alpha values. It is expected that as the size of the interval increases, the alpha value will also increase. To check this, we consider the cyclic system with $6$ variables;  $f_l=\sum_{j=1}^6\prod_{k=j}^{j+l}x_j$ for $l=1,\dots,5$ and $f_6 = \prod_{j=1}^6x_j-1$
with a numerical solution $a=(a_1,-a_1,a_2,a_3,-a_3,-a_2)$ where $a_1=.782290+.622915i, a_2=.866025-.5i$ and $a_3=.148315+.988940i$ whose distance from the nearest exact solution is $8.261221e-7$. Defining the interval vector centered at $a$, we compute three constant values using our implementation with double precision while we change the radius of the interval (see Table \ref{table:radius}). For reference, values computed by \texttt{alphaCertified} \cite{hauenstein2011alphacertified} with exact arithmetic are also recorded. 

\begin{table}
\centering
\begin{tabular}{c|c|c|c}
    radius & alpha & beta & gamma \\
    \hhline{=|=|=|=}
    $1e-5$ & $1.15611e+1$ & $4.03387e-3$ & $2.86602e+3$ \\
    $1e-7$ & $1.66585e-2$ & $1.26115e-5$ & $1.32089e+3$ \\
    $1e-10$ & $1.62002e-5$ & $1.23899e-8$ & $1.30754e+3$ \\
    $1e-15$ & $5.75515e-11$ & $4.40158e-14$ & $1.30752e+3$ \\
    $1e-20$ & $1.63550e-11$ & $1.25084e-14$ & $1.30752e+3$ \\
    $1e-30$ & $1.63550e-11$ & $1.25084e-14$ & $1.30752e+3$ \\
    \texttt{aC} & $1.53040e-13$ & $1.17046e-16$ & $1.30752e+3$ \\
\end{tabular}
\caption{The values of alpha, beta, and gamma constants for the cyclic-$6$ system according to the change of the radius of the input interval. All computations were conducted with double precision. The last row shows the values obtained by the software \texttt{alphaCertified} with exact arithmetic.}
\label{table:radius}
\end{table}

From the result, each constant gets larger as the radius increases. The improvement in alpha and beta are more noticeable than that of gamma because the beta value is affected by the value of $F(x)$ which is more sensitive to the change of the size of the input interval than the bound for gamma value given in Proposition \ref{prop:higher-bound-interval}. Once the radius gets small enough, due to the conservative computation from interval arithmetic, it shows only a slight improvement on three constant values. 

\subsection{Alpha values according to precision}\label{sec:precision}
In this section, we explore how precision affects the values.
We consider the same cyclic-$6$ system, and the interval box with the radius $1e-20$ centered at the solution $a$ used in Section \ref{subsec:experiment-radius}. The comparison of alpha, beta, and gamma values according to the change of precision is given in Table \ref{table:precision}. 

\begin{table}
\centering
\begin{tabular}{c|c|c|c}
    precision & alpha & beta & gamma \\
    \hhline{=|=|=|=}
    $16$ & $4.44465e-2$ & $3.39918e-5$ & $1.30757e+3$ \\
    $32$ & $4.42851e-7$ & $3.38696e-10$ & $1.30752e+3$ \\
    $64$ & $2.03482e-13$ & $1.55624e-16$ & $1.30752e+3$ \\
    $128$ & $2.02057e-13$ & $1.54534e-16$ & $1.30752e+3$ \\
    $256$ & $2.02057e-13$ & $1.54534e-16$ & $1.30752e+3$ \\
    double & $1.63550e-11$ & $1.25084e-14$ & $1.30752e+3$
\end{tabular}
\caption{The values of alpha, beta, and gamma constants for the cyclic-$6$ system according to the change of precision.}
\label{table:precision}
\end{table}

The result shows that the smaller alpha and beta values are returned as the larger precision is used. The value of gamma does not improve much since the beta value is affected by the value of $F(x)$ which is more sensitive to the change of precision. The changes in all three values become insignificant when the precision higher than $128$ is used.

\subsection{Time comparison with \texttt{alphaCertified}}\label{sec:comparison}

We provide a time comparison with the software \texttt{alphaCertified}. 
We experiment with the Fano problem studied in \cite{yahl2023computing}. The Fano problem of type $(n,k,d)$ where $d=(d_1,\dots, d_l)$ is the problem of finding $n$-dimensional planes lying in a complete intersection of $l$ hypersurfaces $f_1,\dots, f_l$ in $\mathbb{P}^k$ with degrees $\deg f_1= d_1,\dots, \deg f_l=d_l$. Fano problems can be described as problems of solving a square polynomial system. For example, Fano problems of $(1,5,(2,4))$ is related to a square polynomial system of $8$ variables with $1280$ solutions (up to multiplicity), and $(1,8,(2,2,2,4))$ is related to a square polynomial systems of $14$ variables with $47104$ solutions (up to multiplicity). We find numerical solutions of these two systems with \texttt{Macaulay2} expressed in floating point arithmetic, and certify them using our implementation and \texttt{alphaCertified} by varying the number of candidate solutions. For our implementation, we compute alpha, beta, and gamma values for each candidate solution using both double precision and MPFI with $256$ precision. For \texttt{alphaCertified}, calculations are performed both using exact arithmetic and floating-point arithmetic for comparison even though floating-point arithmetic only provides soft verification. The result is recorded in Table \ref{table:Fano}.

\begin{table}
\centering
    \begin{minipage}{.49\linewidth}
\begin{flushleft}
$(1,5,(2,4))$, a square system with $8$ variables.
\vspace{-.5pc}
\end{flushleft}
\begin{tabular}{c|c|c|c|c}
    \#sols & double & $256$ prec. &\texttt{aC} exact & \texttt{aC} float \\ 
    \hhline{=|=|=|=|=}
    $20$ & $.06$ & $.67$ & $92.92$ & $.16$\\
    $50$ & $.11$ & $1.59$& $242.00$ & $.27$\\
    $200$ & $.31$ & $6.13$ &  $1067.02$ & $.84$\\
    $1000$ & $1.39$ & $32.83$ &  $7649.28$ & $3.92$
\end{tabular}
\end{minipage}
    \begin{minipage}{.49\linewidth}
\begin{flushleft}
$(1,8,(2,2,2,4))$, a square system with $14$ variables.
\vspace{-.5pc}
\end{flushleft}
\begin{tabular}{c|c|c|c|c}
    \#sols & double & $256$ prec.&\texttt{aC} exact & \texttt{aC} float \\
    \hhline{=|=|=|=|=}
    $20$ & $.41$ & $7.78$ &  $8743.73$ & $1.51$\\
    $50$ & $.86$ & $18.26$ &  $22500.32$ & $2.28$\\
    $200$ & $3.07$ & $72.10$ &  $89455.63$ & $6.73$\\
    $1000$ & $14.73$ & $400.51$&  $-$ & $27.61$
\end{tabular}
\end{minipage}%

\caption{Elapsed time in seconds for certifying solutions for Fano problems of using the implementation of alpha theory over regions, and the software \texttt{alphaCertified}. The symbol $-$ means that the computation does not terminate within $2$ days.}
\label{table:Fano}
\end{table}

The result shows that the alpha theory over regions shows less elapsed time on computation than \texttt{alphaCertified} with exact arithmetic. The implementation with $256$ precision may take more time than \texttt{alphaCertified} with floating point arithmetic, but it returns more reliable results than computation with floating point arithmetic. The implementation with double precision takes less time than that with MPFI. 

Note that comparing the elapsed time of two software might not be fair since \texttt{alphaCertified} performs further analysis to classify distinct solutions. Nonetheless, the result shows the potential of the alpha theory over regions as it produces reliable results in a significantly shorter time than the computation with exact arithmetic.

\begin{credits}
\subsubsection{\ackname} We would like to thank Michael Burr and Thomas Yahl for helpful discussions. We also thank the anonymous referees for their constructive comments.

\subsubsection{\discintname}
The authors have no competing interests to declare that are
relevant to the content of this article.
\end{credits}
%
%
%
 \bibliographystyle{splncs04}
 \bibliography{../sample-base}
\end{document}